\newtheorem{thm}{Theorem}[section]
\newcommand{\aaa}{\mathbf{a}}
\newcommand{\x}{\mathbf{x}}
\newcommand{\y}{\mathbf{y}}
\newcommand{\h}{\mathbf{h}}
\newcommand{\pp}{\mathbf{p}}
\newcommand{\vv}{\mathbf{v}}
\newcommand{\AAA}{\mathbf{A}}
\newcommand{\HH}{\mathbf{H}}
\newcommand{\JJ}{\mathbf{J}}
\newcommand{\SSSS}{\mathbf{S}}
\newcommand{\W}{\mathbf{W}}
\newcommand{\X}{\mathbf{X}}
\newcommand{\rank}{ \textrm{rank} }
\newcommand{\for}{\quad \textrm{for} \quad}
\newcommand{\trace}{ \textrm{trace} }
\begin{document}


\title{Reconstruction of signals from their autocorrelation and cross-correlation vectors, with applications to phase retrieval and blind channel estimation}
\author{
\begin{tabular}[t]{c@{\extracolsep{10em}}c} 
Kishore Jaganathan & Babak Hassibi
\end{tabular}
\\ \vspace{0.2cm}
Department of Electrical Engineering, California Institute of Technology, Pasadena.
\thanks{K. Jaganathan and B. Hassibi were supported in part by the National Science Foundation under grants CCF-0729203, CNS-0932428 and CIF-1018927, by the Office of Naval Research under the MURI grant N00014-08-1-0747, and by a grant from Qualcomm Inc.}  
}

\maketitle

\begin{abstract}
We consider the problem of reconstructing two signals from the autocorrelation and cross-correlation measurements. This inverse problem is a fundamental one in signal processing, and arises in many applications, including phase retrieval and blind channel estimation. In a typical phase retrieval setup, only the autocorrelation measurements are obtainable. We show that, when the measurements are obtained using three simple ``masks", phase retrieval reduces to the aforementioned reconstruction problem.

The classic solution to this problem is based on finding common factors between the $z$-transforms of the autocorrelation and cross-correlation vectors. This solution has enjoyed limited practical success, mainly due to the fact that it is not sufficiently stable in the noisy setting. In this work, inspired by the success of convex programming in provably and stably solving various quadratic constrained problems, we develop a semidefinite programming-based algorithm and provide theoretical guarantees. In particular, we show that almost all signals can be uniquely recovered by this algorithm (up to a global phase). Comparative numerical studies demonstrate that the proposed method significantly outperforms the classic method in the noisy setting.
\end{abstract}
\begin{keywords}
Autocorrelation, cross-correlation, phase retrieval, blind channel estimation, convex programming.
\end{keywords}


\section{Introduction}

\subsection{Problem Setup}

For the sake of exposition, we begin by considering the discretized $1D$ setting\footnote[2]{The results developed in this work are also applicable to discretized $2D$ signals, we refer the readers to Section \ref{sec:2d} for details.}. Suppose $\x_1 = ( x_1[0], x_1[1] ,\cdots , x_1[L_1-1])^T$ and $\x_2 = ( x_2[0], x_2[1] ,\cdots , x_2[L_2-1])^T$ are the two complex signals of interest. Let $\aaa_1 = ( a_1[1-L_1] , \cdots , a_1[ 0 ] , \cdots , a_1[ L_1 - 1 ] )^T$ and $\aaa_2=( a_2[1-L_2] , \cdots , a_2[ 0 ] , \cdots , a_2[ L_2 - 1 ] )^T$ denote the autocorrelation vectors of $\x_1$ and $\x_2$ respectively, defined as 
\begin{align}
a_1[m] &= \sum_{n=0}^{L_1-1} x_1[n]x_1^\star[n-m], \\
a_2[m] &= \sum_{n=0}^{L_2-1} x_2[n]x_2^\star[n-m],  \nonumber
\end{align}
where, for notational convenience, $x_1[n]$ and $x_2[n]$ have a value of zero outside the intervals $0 \leq n \leq L_1-1$ and $0 \leq n \leq L_2-1$ respectively. Similarly, let  $\aaa_{12}=( a_{12}[1-L_2] , \cdots , a_{12}[ 0 ] , \cdots , a_{12}[ L_1 - 1 ] )^T$ and $\aaa_{21} = ( a_{21}[1-L_1] , \cdots , a_{21}[ 0 ] , \cdots , a_{21}[ L_2 - 1 ] )^T$  denote the cross-correlation vectors of $\x_1$ and $\x_2$, defined as
\begin{align}
a_{12}[m] &= \sum_{n=0}^{L_1-1} x_1[n]x_2^\star[n-m], \\
a_{21}[m] &= \sum_{n=0}^{L_2-1} x_2[n]x_1^\star[n-m]. \nonumber 
\end{align}
Our goal is to uniquely, stably and efficiently reconstruct $\x_1$ and $\x_2$ from $\aaa_1$, $\aaa_2$, $\aaa_{12}$ and $\aaa_{21}$.


\subsection{Trivial Ambiguities}

Observe that the operations of global phase-change and time-shift on $\x_1$ and $\x_2$ do not affect their autocorrelation and cross-correlation vectors. In particular, the autocorrelation vectors of the signals $e^{i\phi}\x_1$ and $e^{i\phi}\x_2$ are $\aaa_1$ and $\aaa_2$ respectively, and their cross-correlation vectors are $\aaa_{12}$ and $\aaa_{21}$. Similarly, the autocorrelation vectors of the signals $\x_1$ and $\x_2$ time-shifted by $c$ units are $\aaa_1$ and $\aaa_2$ respectively, and their cross-correlation vectors are $\aaa_{12}$ and $\aaa_{21}$. Indeed, the assumption that $\x_1$ and $\x_2$ have non-zero values only within the indices $0 \leq n \leq L_1-1$ and $0 \leq n \leq L_2-1$ respectively resolves the time-shift ambiguity when $x_1[0] \neq 0, x_1[L_1-1]\neq 0$ or $x_2[0] \neq 0, x_2[L_2-1]\neq 0$ or $x_1[0] \neq 0, x_2[L_2-1]\neq 0$ or $x_2[0] \neq 0, x_1[L_1-1]\neq 0$.

Consequently, from the autocorrelation and cross-correlation vectors, recovery is in general possible only up to a global-phase and time-shift. These ambiguities are commonly referred to as {\em trivial ambiguities} in literature. Throughout this work, when we refer to successful recovery, it is assumed to be up to the trivial ambiguities.

\subsection{Classic Method}
The classic approach to this reconstruction problem is based on finding common factors between the $z$-transforms of the autocorrelation and cross-correlation vectors. Let $X_1(z)$, $X_2(z)$, $A_1(z)$, $A_2(z)$, $A_{12}(z)$ and $A_{21}(z)$ denote the $z$-transforms of $\x_1$, $\x_2$, $\aaa_1$, $\aaa_2$, $\aaa_{12}$ and $\aaa_{21}$ respectively. The objective is equivalent to reconstruction of the polynomials $X_1(z)$ and $X_2(z)$ from the polynomials $A_1(z)$, $A_2(z)$, $A_{12}(z)$ and $A_{21}(z)$. 

The aforementioned polynomials are related as follows:
\begin{align}
A_1(z) &= X_1(z)X_1^\star(z^{-\star}), \\
A_2(z) &= X_2(z)X_2^\star(z^{-\star}), \nonumber \\
A_{12}(z) &= X_1(z)X_2^\star(z^{-\star}), \nonumber \\
A_{21}(z) &= X_2(z)X_1^\star(z^{-\star}). \nonumber
\end{align}

The key idea is the following: Suppose the polynomials $z^{L_1-1}X_1(z)$ and $z^{L_2-1}X_2(z)$ are co-prime, i.e., they do not have any common roots. Then, $X_1(z)$ can be reconstructed by identifying the common factors between the polynomials $z^{L_1-1}A_1(z)$ and $z^{L_1-1}A_{12}(z)$. Similarly, $X_2(z)$ can be reconstructed by identifying the common factors between the polynomials $z^{L_2-1}A_2(z)$ and $z^{L_2-1}A_{21}(z)$\footnote[3]{The multiplying terms $z^{L_1-1}$ and $z^{L_2-1}$ ensure that the polynomials consist of only non-negative powers of $z$.}. 

In fact, in the classic paper \cite{tong1}, the authors show that the co-prime condition is a necessary and sufficient criterion for successful recovery. Additionally, the authors also provide an algorithm based on finding the {\em greatest common divisor} and {\em residuals} of two polynomials using Sylvester matrices \cite{sylv1}. Numerical simulations show that the algorithm is somewhat stable in the noisy setting.  

For a brief discussion on Sylvester matrices and their use in finding the greatest common divisor and residuals of two  polynomials, we refer the readers to Appendix \ref{sec:sylvestermatrices}.

\subsection{Contributions}

In this work, we develop a semidefinite programming (SDP)-based algorithm. We show that almost all signals can be successfully recovered by this algorithm, subject to the aforementioned co-prime condition (Theorem \ref{thm:main}). In the noisy setting, we conduct extensive numerical simulations and verify the efficacy of the proposed algorithm.

The rest of the paper is organized as follows: In Section 2, we discuss the practical applications of the reconstruction problem. In Section 3, we present our algorithm and provide theoretical guarantees. The results of the various numerical studies are provided in Section 4, and Section 5 concludes the paper.

\section{Motivation}

In this section, we describe two major applications of the reconstruction problem: phase retrieval and blind channel estimation.

\subsection{Phase Retrieval}
\label{sec:pr}

\begin{figure*}
\begin{center}
\subfloat[{Mask \#1}]{\includegraphics[scale=0.4]{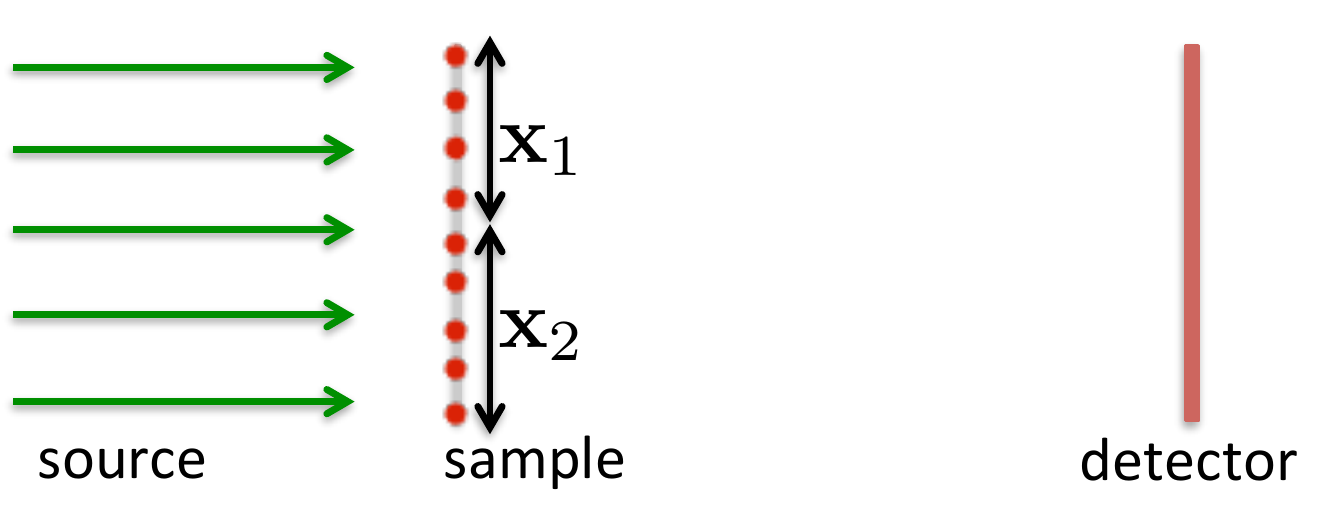}} \hspace{0.5cm}
\subfloat[{Mask \#2}]{\includegraphics[scale=0.4]{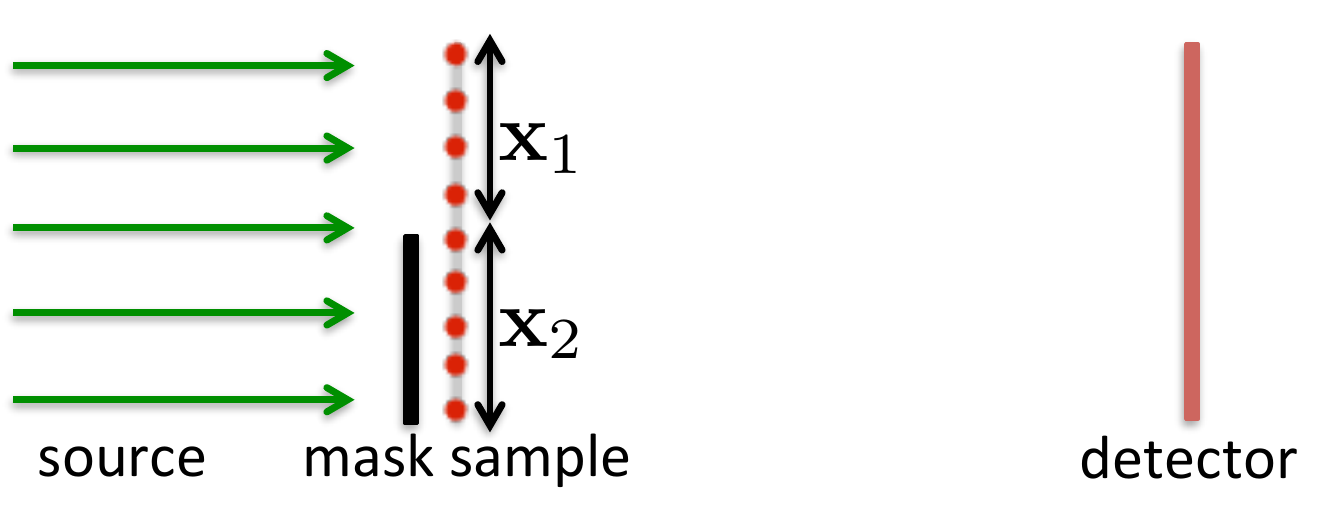}} \hspace{0.5cm}
\subfloat[{Mask \#3}]{\includegraphics[scale=0.4]{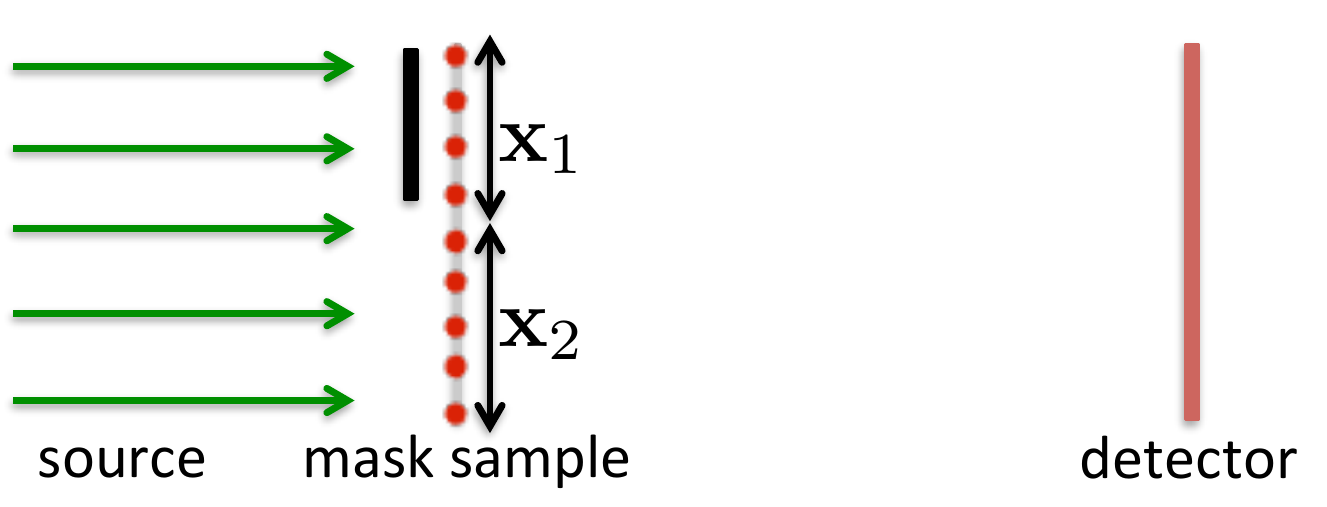}}
\end{center}
\caption{A pictorial representation of a typical 1D phase retrieval setting using the proposed set of masks. A monochromatic beam is incident on the masked sample, and the detector measures the autocorrelation vector of the part of the sample that is not blocked by the mask.}
\label{fig:pr1d}
\end{figure*}
\begin{figure*}
\begin{center}
\subfloat[{Mask \#1}]{\includegraphics[scale=0.4]{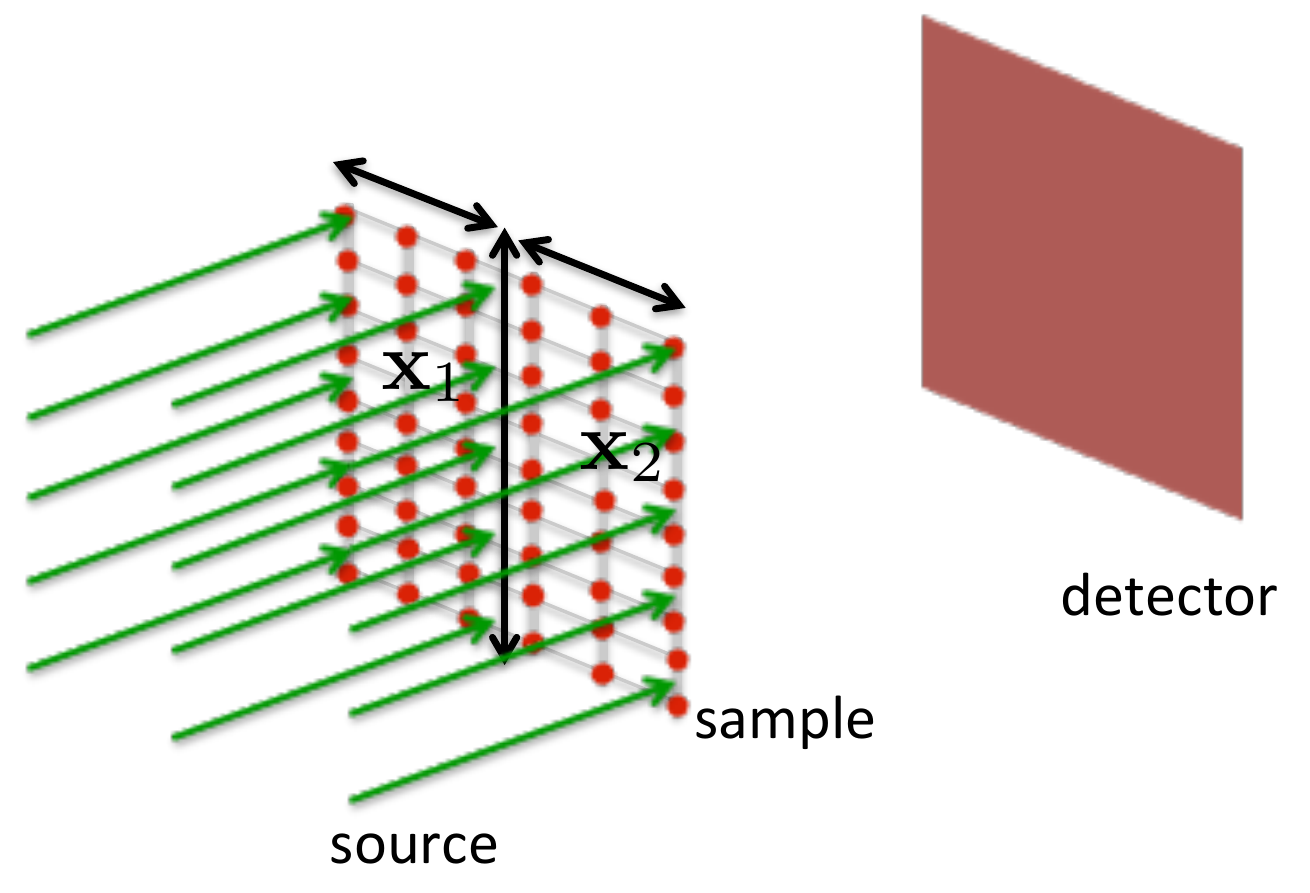}} \hspace{0.5cm}
\subfloat[{Mask \#2}]{\includegraphics[scale=0.4]{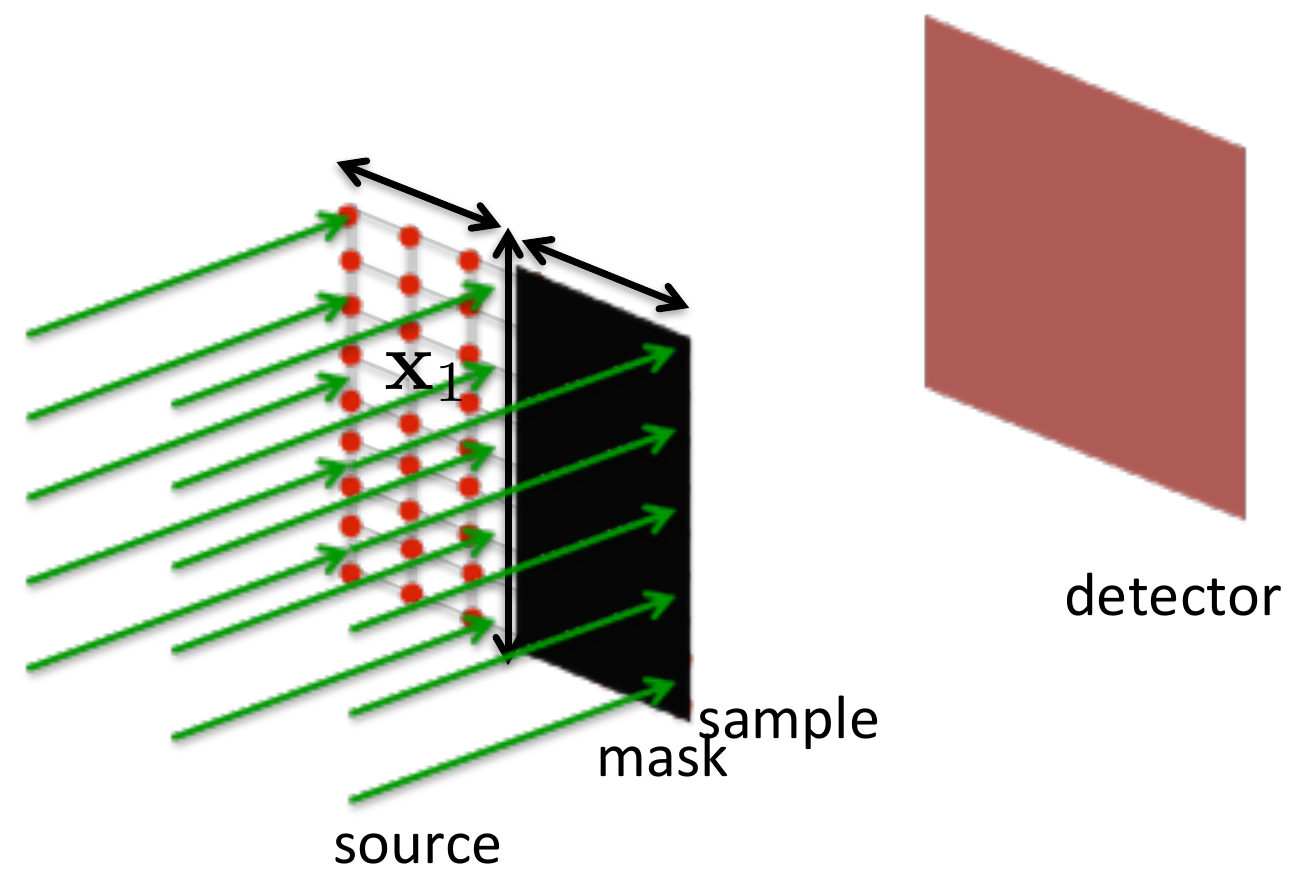}} \hspace{0.5cm}
\subfloat[{Mask \#3}]{\includegraphics[scale=0.4]{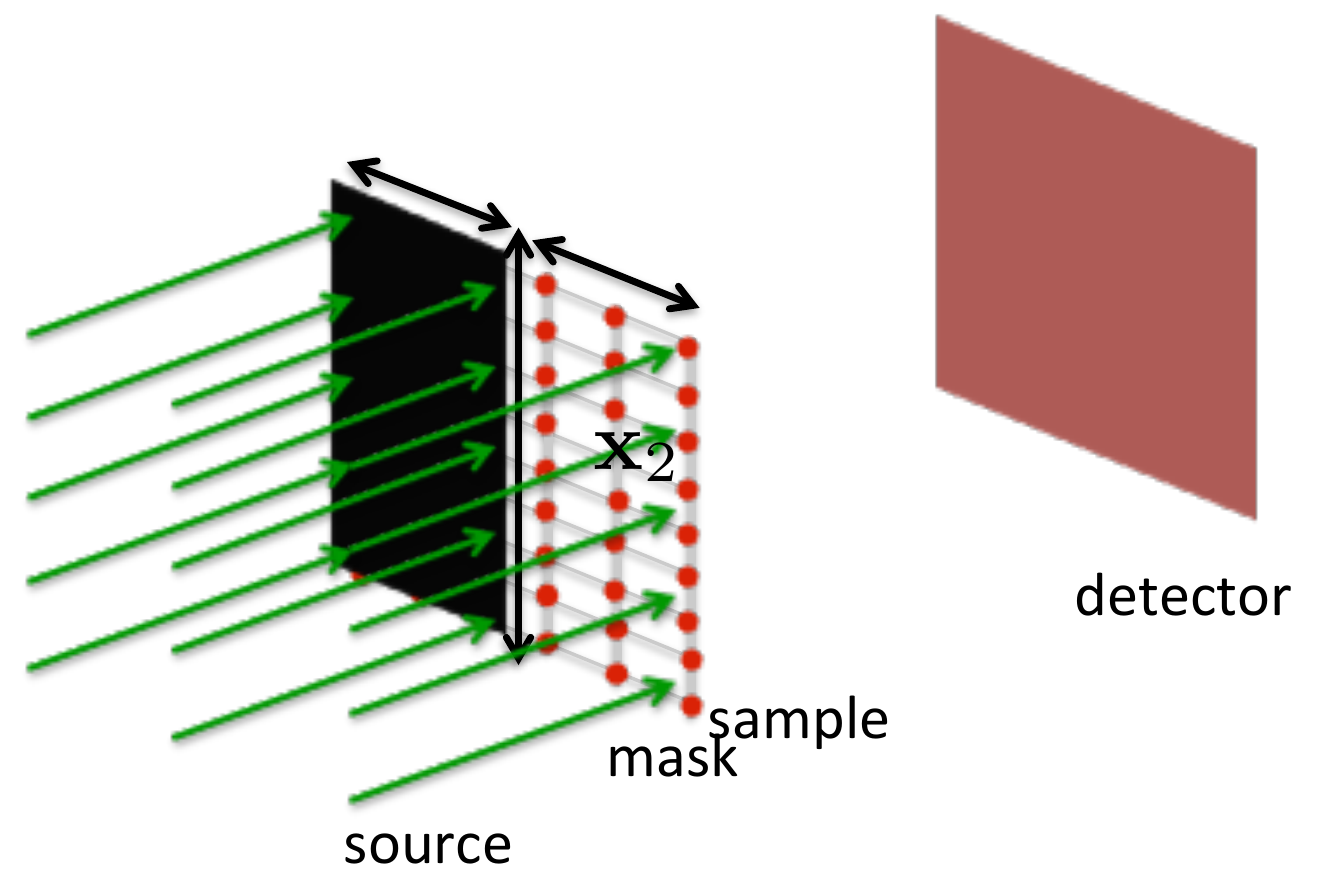}}
\end{center}
\caption{A pictorial representation of a typical 2D phase retrieval setting using the proposed set of masks.}
\label{fig:pr2d}
\end{figure*}

In many practical measurement systems, the measurable quantity is the autocorrelation vector of the signal. Recovering the underlying signal from the autocorrelation measurements is known as phase retrieval.  Phase retrieval arises in many areas of engineering and applied physics, including X-ray crystallography \cite{patt2}, optics \cite{walther, millane}, astronomical imaging \cite{dainty}, bioinformatics \cite{stef} and more. 

Despite an enormous amount of research for nearly hundred years, there are no known efficient and stable algorithms with theoretical guarantees. It is widely accepted that phase retrieval is a computationally difficult problem. We refer the interested readers to \cite{fienup,bauschke} for classic surveys and to \cite{eldarmagazine, kishorereview} for contemporary reviews.

In order to overcome the computational issues of phase retrieval, a common approach in practice is to obtain additional information on the signal by introducing simple modifications to the measurement process. To this end, masking is a popular technique, in which parts of the signal are physically blocked using a {\em mask} and the autocorrelation vector of the rest of the signal is measured \cite{roarke1,roarke2,waller1,waller2}. The premise, in a nutshell, is to introduce redundancy in the reconstruction problem by collecting multiple autocorrelation measurements. In the following, we describe three simple masks and show that, when autocorrelation measurements are obtained using them, phase retrieval is equivalent to the problem of recovering two signals from the autocorrelation and cross-correlation measurements.

Let $\x=(x[0], x[1], \cdots, x[N-1])^T$ be the underlying signal which we wish to determine, and $X(z)$ be its $z$-transform. We use the notation $\x_1 = (x[0], x[1], \cdots , x[L-1])^T$ and $\x_2=(x[L], x[L+1], \cdots , x[N-1])^T$, where $L$ is an integer in the interval $1 \leq L \leq N-2$. In other words, $\x = \begin{bmatrix}\x_1 \\ \x_2\end{bmatrix}$, where $\x_1$ is the signal constructed using the first $L$ entries of $\x$ and $\x_2$ is the signal constructed using the remaining entries of $\x$.

Suppose autocorrelation measurements are collected using the following three masks: 
\begin{itemize}
	\item[(a)] The first mask does not block any part of the signal.

	\item[(b)] The second mask blocks the signal in the interval $L \leq n \leq N-1$. 

	\item[(c)] The third mask blocks the signal in the interval $0 \leq n \leq L-1$. 
\end{itemize}

A pictorial representation is provided in Fig. \ref{fig:pr1d}. Note that the measurements provide the knowledge of the autocorrelation vectors of $\x$, $\x_1$ and $\x_2$. Since we have the relationship
\begin{equation}
X(z) = X_1(z) + z^{-L}X_2(z), \nonumber
\end{equation}
the polynomials $\left(X_1(z) + z^{-L}X_2(z) \right) \left( X_1^\star(z^{-\star}) + z^{L}X_2^\star(z^{-\star})\right)$,  $X_1(z)X_1^\star(z^{-\star})$ and $X_2(z)X_2^\star(z^{-\star})$ are provided by the measurements. Hence, we can infer the polynomial $z^{-L}X_2(z)X_1^\star(z^{-\star}) + z^{L}X_1(z) X_2^\star(z^{-\star})$ from the measurements. Since $z^{-L}X_2(z)X_1^\star(z^{-\star})$ has  terms consisting of only negative powers of $z$ and $z^{L}X_1(z) X_2^\star(z^{-\star})$ has terms consisting of only positive powers of $z$, we can infer the polynomials $X_2(z)X_1^\star(z^{-\star})$ and $X_1(z) X_2^\star(z^{-\star})$ from the measurements.

Therefore, by collecting autocorrelation measurements using the aforementioned three masks, the autocorrelation and cross-correlation vectors of $\x_1$ and $\x_2$ can be inferred. Consequently, phase retrieval reduces to the problem of reconstruction of $\x_1$ and $\x_2$ from their autocorrelation and cross-correlation vectors.

{\bf Remarks}: (i) The total number of phaseless Fourier measurements provided by these masks is $4N$: In order to obtain the autocorrelation vector of a signal of length $N$, it is well-known that $2N$ phaseless Fourier measurements are necessary and sufficient (see Appendix of \cite{kishorestft} for example). The three masks obtain the autocorrelation vectors of signals of lengths $N$, $L$ and $N-L$. The $4N$ quantity has been of significant interest to the phase retrieval community \cite{balan1,balan2,bandeira,ohlsson}.

(ii) In \cite{vpr1,vpr2}, the authors propose a framework called vectorial phase retrieval (VPR). Mathematically, the framework proposed in this section is equivalent to VPR. Indeed, VPR is another framework where the reconstruction problem arises. We refer the interested readers to \cite{vpr1,vpr2} for details.

\subsection{Blind Channel Estimation}

In many communication systems, channel estimation is required in order to be able to achieve reliable communication. A common way of doing this is by periodically sending training sequences known both to the transmitter and receiver \cite{babaktraining}. In scenarios where this is not possible, blind channel estimation is a popular technique, in which the transmitted signal is inferred from the received signal using only the statistical properties of the transmitted signal \cite{sato,godard,tong2}. 

Let $\x$ be a zero-mean and unit-variance i.i.d. random process. Suppose it is transmitted through two linear time-invariant FIR channels $\h_1$ and $\h_2$, or equivalently $H_1(z)$ and $H_2(z)$ in the $z$-transform domain, to obtain random processes $\y_1$ and $\y_2$ respectively. The power spectral densities of $\y_1$ and $\y_2$, denoted by $S_{y_1}(z)$ and $S_{y_2}(z)$, are given by
\begin{align}
S_{y_1}(z) &= H_1(z)H_1^\star(z^{-\star}),  \\
S_{y_2}(z) &= H_2(z)H_2^\star(z^{-\star}), \nonumber
\end{align}
and their cross-spectral densities, denoted by $S_{y_1y_2}(z)$ and $S_{y_2y_1}(z)$, are given by 
\begin{align}
S_{y_1y_2}(z) &= H_1(z)H_2^\star(z^{-\star}), \\
S_{y_2y_1}(z) &= H_2(z)H_1^\star(z^{-\star}). \nonumber
\end{align}
Therefore, the aforementioned measurements provide the knowledge of the autocorrelation and cross-correlation vectors of $\h_1$ and $\h_2$. Consequently, blind channel estimation reduces to the problem of reconstruction of two signals from their autocorrelation and cross-correlation vectors. 

\begin{figure}
\begin{center}
\includegraphics[scale=0.4]{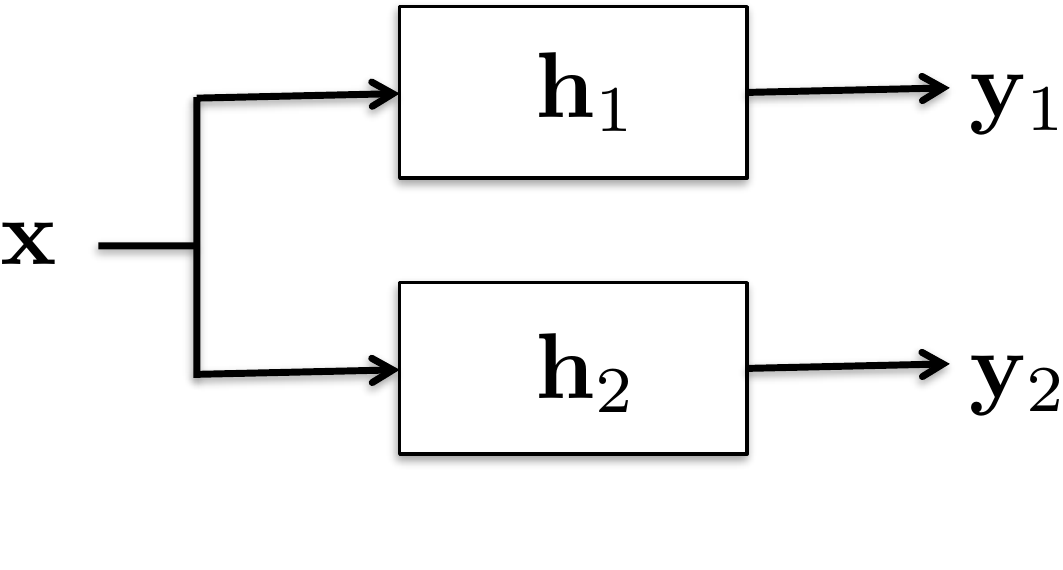}
\end{center}
\caption{The goal of blind channel estimation is to identify $\h_1$ and $\h_2$ using $\y_1$, $\y_2$ and only the statistical properties of $\x$.}
\end{figure}

{\bf Remark}: In \cite{tong3}, the authors show that, if the sampling rate at the receiver is twice the transmission rate (also known as baud rate), then a single linear time-invariant FIR channel mathematically decomposes into two linear time invariant FIR channels. The key idea is the following: The channel $H(z)$ is expressed as 
\begin{equation}
H(z) = H_e(z^2) + z^{-1}H_o(z^2), \nonumber
\end{equation}
where $H_e(z)$ and $H_o(z)$ are the channels involving only the taps corresponding to the even and odd time-slots respectively. Since transmission happens only at even time-slots, the received vector corresponding to the even time-slots is as if the transmitted signal was passed through $H_e(z)$, and the received vector corresponding to the odd time-slots is as if it was passed through $H_o(z)$, thereby converting a single linear time-invariant FIR channel into two linear time-invariant FIR channels. This extends the applicability of the reconstruction problem to scenarios where multiple channels are not available.  

\section{SDP-based reconstruction}

In this section, we first develop the SDP-based algorithm for $1D$ signals and provide theoretical guarantees. Then, we extend the algorithm and theory to $2D$ signals.

Note that the autocorrelation and cross-correlation measurements are quadratic in nature. SDP-based algorithms have been shown to yield robust solutions with theoretical guarantees to various quadratic-constrained optimization problems (see \cite{lovasz, goemans, candespl, candespr, mahdi, li, samet , kishorespr, kishorestft, kishorethesis,eldarcvx,kishorecvx,tropp,romberg} and references therein). Therefore, it is natural to try SDP techniques to solve this problem. An SDP formulation of the reconstruction problem can be obtained by a procedure popularly known as {\em lifting}: 

Let $\x=\begin{bmatrix}\x_1 \\ \x_2\end{bmatrix}$ be the $(L_1+L_2) \times 1$ vector obtained by stacking $\x_1$ and $\x_2$. We embed $\x$ in a higher-dimensional space using the transformation $\X=\x\x^\star$. Since the autocorrelation and cross-correlation measurements are linear in the matrix $\X$, the reconstruction problem reduces to finding a rank-one positive semidefinite matrix which satisfies particular affine constraints. In other words, the reconstruction problem can be equivalently written as
\begin{align}
& \textrm{find} \hspace{1.35cm} \X, \\
& \textrm{subject to} \hspace{0.5cm}\trace(\AAA_m\X) = b_m \for 0 \leq m < M, \nonumber \\
& \hspace{1.9cm} \X \succcurlyeq 0 ~~ \& ~~ \rank(\X) = 1, \nonumber
\end{align}
for appropriate choices of {\em sensing} matrices and measurements $\AAA_m$ and $b_m$, for $0 \leq m < M$, respectively. For example, consider the setup with $L_1=2$ and $L_2=2$. We have $M=12$, as there are $3 + 3$ autocorrelation terms and $3 + 3$ cross-correlation terms. The sensing matrices are
\begin{align*}
&\begin{bmatrix}0 & 0 & 0 & 0 \\ 1 & 0 & 0 & 0 \\ 0 & 0 & 0 & 0 \\ 0 & 0 & 0 & 0\end{bmatrix},
\begin{bmatrix}1 & 0 & 0 & 0 \\ 0 & 1 & 0 & 0 \\ 0 & 0 & 0 & 0 \\ 0 & 0 & 0 & 0\end{bmatrix},
\begin{bmatrix}0 & 1 & 0 & 0 \\ 0 & 0 & 0 & 0 \\ 0 & 0 & 0 & 0 \\ 0 & 0 & 0 & 0\end{bmatrix} ,\\
&\begin{bmatrix}0 & 0 & 0 & 0 \\ 0 & 0 & 0 & 0 \\ 0 & 0 & 0 & 0 \\ 0 & 0 & 1 & 0\end{bmatrix}, 
\begin{bmatrix}0 & 0 & 0 & 0 \\ 0 & 0 & 0 & 0 \\ 0 & 0 & 1 & 0 \\ 0 & 0 & 0 & 1\end{bmatrix},
\begin{bmatrix}0 & 0 & 0 & 0 \\ 0 & 0 & 0 & 0 \\ 0 & 0 & 0 & 1 \\ 0 & 0 & 0 & 0\end{bmatrix},\\
&\begin{bmatrix}0 & 0 & 0 & 0 \\ 0 & 0 & 1 & 0 \\ 0 & 0 & 0 & 0 \\ 0 & 0 & 0 & 0\end{bmatrix},
\begin{bmatrix}0 & 0 & 1 & 0 \\ 0 & 0 & 0 & 1 \\ 0 & 0 & 0 & 0 \\ 0 & 0 & 0 & 0\end{bmatrix},
\begin{bmatrix}0 & 0 & 0 & 1 \\ 0 & 0 & 0 & 0 \\ 0 & 0 & 0 & 0 \\ 0 & 0 & 0 & 0\end{bmatrix},\\
&\begin{bmatrix}0 & 0 & 0 & 0 \\ 0 & 0 & 0 & 0 \\ 0 & 0 & 0 & 0 \\ 1 & 0 & 0 & 0\end{bmatrix},
\begin{bmatrix}0 & 0 & 0 & 0 \\ 0 & 0 & 0 & 0 \\ 1 & 0 & 0 & 0 \\ 0 & 1 & 0 & 0\end{bmatrix},
\begin{bmatrix}0 & 0 & 0 & 0 \\ 0 & 0 & 0 & 0 \\ 0 & 1 & 0 & 0 \\ 0 & 0 & 0 & 0\end{bmatrix},\\
\end{align*}
and the corresponding measurements are $a_1[-1], a_1[0], a_1[1]$, $a_2[-1], a_2[0], a_2[1]$, $a_{12}[-1], a_{12}[0], a_{12}[1]$, $a_{21}[-1], a_{21}[0]$ and $a_{21}[1]$.

To obtain an SDP formulation, one possibility is to relax the rank constraint, resulting in the following convex algorithm: 
\begin{algorithm}[H]
\caption{SDP-based reconstruction algorithm}
{\bf Inputs}: The autocorrelation and cross-correlation measurements $b_m$ for $0 \leq m < M$, the signal lengths $L_1$ and $L_2$. \\
{\bf Outputs}: Signal estimates $\hat{\x}_1$ and $\hat{\x}_2$. \\
\begin{itemize}
\item Obtain the $(L_1+L_2) \times (L_1+L_2)$ matrix $\hat{\X}$ by solving
\begin{align}
\label{eqn:sdp}
& \textrm{find} \hspace{1.35cm} \X, \\
& \textrm{subject to} \hspace{0.5cm}\trace(\AAA_m\X) = b_m \for 0 \leq m < M, \nonumber \\
& \hspace{1.9cm} \X \succcurlyeq 0. \nonumber
\end{align}
\item Calculate the best rank-one approximation of $\hat{\X}$ through SVD, and get $\hat{\x}\hat{\x}^\star$.
\item Return $\hat{\x}_1=(\hat{x}[ 0 ], \hat{x}[ 1 ], \cdots, \hat{x}[ L_1 - 1 ])^T$ and $\hat{\x}_2=(\hat{x}[ L_1 ], \hat{x}[ L_1 + 1 ], \cdots, \hat{x}[ L_1 + L_2 - 1 ])^T$.
\end{itemize}
\label{algo:sdp}
\end{algorithm}
We provide the following theoretical guarantee for recovery using Algorithm \ref{algo:sdp}:
\begin{thm}
Suppose the signals $\x_1$ and $\x_2$, of lengths $L_1$ and $L_2$ respectively, are such that the polynomials $z^{L_1-1}X_1(z)$ and $z^{L_2-1}X_2(z)$ are co-prime, and $x_1[0],x_2[0]\neq0$. For almost all such $\x_1$ and $\x_2$, the convex program (\ref{eqn:sdp}) has a unique feasible point, namely, $\begin{bmatrix}\x_1 \\ \x_2\end{bmatrix}\begin{bmatrix}\x_1^\star ~ \x_2^\star\end{bmatrix}$, and thus the outputs of Algorithm \ref{algo:sdp} are $\hat{\x}_1=\x_1$ and $\hat{\x}_2=\x_2$. 
\label{thm:main}
\end{thm}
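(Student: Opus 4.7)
My plan is a standard lifting/uniqueness argument exploiting the PSD structure of any feasible matrix. The target $\X_0 = \begin{bmatrix}\x_1\\\x_2\end{bmatrix}\begin{bmatrix}\x_1^\star & \x_2^\star\end{bmatrix}$ is feasible for (\ref{eqn:sdp}) by construction, so the task is uniqueness. Fix any feasible $\X \succeq 0$ and eigendecompose it as $\X = \sum_{j=1}^{r}\vv_j\vv_j^\star$ with $r=\rank(\X)$, partitioning each $\vv_j$ into $\vv_{1,j}\in\mathbb{C}^{L_1}$ and $\vv_{2,j}\in\mathbb{C}^{L_2}$. It suffices to show that every $\vv_j$ is a scalar multiple of $\x = \begin{bmatrix}\x_1\\\x_2\end{bmatrix}$, since the trace constraint will then pin the scale.

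The next step is to translate the four families of affine constraints into the $z$-transform domain. Summing diagonals of each block of $\X$ gives
\begin{align*}
\sum_j V_{i,j}(z)V_{i,j}^\star(z^{-\star}) &= X_i(z)X_i^\star(z^{-\star}),\quad i=1,2,\\
\sum_j V_{1,j}(z)V_{2,j}^\star(z^{-\star}) &= X_1(z)X_2^\star(z^{-\star}),
\end{align*}
where $V_{i,j}$ denotes the $z$-transform of $\vv_{i,j}$. On the unit circle $z=e^{i\omega}$ these read $\sum_j |V_{i,j}(e^{i\omega})|^2 = |X_i(e^{i\omega})|^2$ and $\sum_j V_{1,j}(e^{i\omega})\overline{V_{2,j}(e^{i\omega})} = X_1(e^{i\omega})\overline{X_2(e^{i\omega})}$. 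Cauchy--Schwarz in $\mathbb{C}^r$ bounds the modulus of the cross-term by $|X_1(e^{i\omega})||X_2(e^{i\omega})|$, and the bound is attained. Hence at every $\omega$ the vectors $(V_{1,j}(e^{i\omega}))_j$ and $(V_{2,j}(e^{i\omega}))_j$ are collinear, and matching the cross-term value pins the proportionality constant to $X_1/X_2$ wherever $X_2\neq 0$. This yields the pointwise relation $V_{1,j}(e^{i\omega})X_2(e^{i\omega}) = V_{2,j}(e^{i\omega})X_1(e^{i\omega})$ at every $\omega$ (the remaining points, where $X_2(e^{i\omega})=0$, are handled by observing that $V_{2,j}(e^{i\omega})=0$ there), which lifts to the polynomial identity $V_{1,j}(z)X_2(z) = V_{2,j}(z)X_1(z)$ since both sides are finite Laurent polynomials agreeing on uncountably many points.

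The final step is to invoke coprimality. Multiplying by $z^{L_1+L_2-2}$ converts the identity into an equality of ordinary polynomials of degree at most $L_1+L_2-2$. Since $z^{L_1-1}X_1(z)$ has degree exactly $L_1-1$ (using $x_1[0]\neq 0$) and is coprime with $z^{L_2-1}X_2(z)$, it must divide $z^{L_1-1}V_{1,j}(z)$; the degree bound then forces $V_{1,j}(z) = c_j X_1(z)$ for some $c_j\in\mathbb{C}$. The mirror argument gives $V_{2,j}(z) = c_j'X_2(z)$, and substituting back into the polynomial identity forces $c_j = c_j'$. Consequently $\X = \bigl(\sum_j|c_j|^2\bigr)\x\x^\star$, and the trace constraint $\trace(\X) = a_1[0]+a_2[0] = \|\x\|^2$ pins $\sum_j|c_j|^2 = 1$, so $\X = \x\x^\star$.

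The step I expect to require the most care is the Cauchy--Schwarz equality analysis: collinearity is derived pointwise on the unit circle, and one has to confirm cleanly that the resulting identity extends through the isolated zeros of $X_2(e^{i\omega})$ (where the proportionality constant is undefined) to a global polynomial identity. Modulo that bookkeeping, the ``almost all'' qualifier in the theorem statement appears unnecessary for this route, which seems to go through for every pair satisfying the coprimality and non-vanishing endpoint hypotheses; it may reflect a conservative statement, or an alternative (e.g.\ duality-based) proof strategy that genuinely needs a genericity assumption.
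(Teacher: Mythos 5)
Your proof is correct, and it takes a genuinely different route from the paper's. The paper argues via duality: it constructs an explicit dual certificate $\W = \SSSS^\star\SSSS$ from the Sylvester matrix of $z^{L}X_1(z)$ and $z^{N-L}X_2(z)$ (whose rank-$(N-1)$ property is exactly where coprimality enters), and then separately establishes injectivity of the measurement map on the tangent space $T_\x$ by a generic-rank argument on a Jacobian-type matrix $\JJ_\x$ --- it is this last step, proved by showing a certain polynomial determinant is not identically zero, that forces the ``almost all'' qualifier. Your argument is instead primal and direct: you eigendecompose an arbitrary feasible $\X\succcurlyeq 0$, observe that the block-Toeplitz constraints say each block's diagonal sums match the prescribed auto/cross-correlations, and use the equality case of Cauchy--Schwarz on the unit circle plus coprimality and the degree bounds (via $x_1[0],x_2[0]\neq 0$) to force every eigenvector to be proportional to $\x$. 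The key steps all check out, including the handling of the zeros of $X_2$ on the circle and the extension from the circle to a Laurent-polynomial identity. What each approach buys: the dual-certificate route is the standard, modular machinery (and the certificate $\SSSS^\star\SSSS$ nicely exposes the link to the classic Sylvester-matrix method), but it pays for Condition 2 with a genericity assumption; your route is self-contained, exploits the PSD cone more directly, and yields the strictly stronger deterministic conclusion that the SDP has a unique feasible point for \emph{every} pair satisfying the coprimality and nonvanishing-endpoint hypotheses, not merely almost all of them. Your closing suspicion is right: the ``almost all'' in the theorem is an artifact of the paper's proof strategy, not a necessity.
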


\begin{proof}
The proof of this theorem involves dual certificates and Sylvester matrices. An overview of the method of dual certificates is provided in Appendix \ref{sec:dualcertificates}, and relevant properties of Sylvester matrices are described in Appendix \ref{sec:sylvestermatrices}. 

As before, we use the notations $\x = \begin{bmatrix}\x_1 \\ \x_2 \end{bmatrix}$, $N = L_1 + L_2$ and $L=L_1$ for the sake of simplicity. Let $T_{\x}$ denote the set of Hermitian matrices of the form
\begin{equation}
T_{\x} = \{ \x\h^\star + \h\x^\star : \h \in \mathbb{C}^N\}, \nonumber
\end{equation}
and $T_{\x}^\perp$ be its orthogonal complement. We use $\HH_{T_{\x}}$ and $\HH_{T_{\x}^\perp}$ to denote the projections of a matrix $\HH$ onto the subspaces $T_{\x}$ and $T_{\x}^\perp$ respectively.

By construction, the matrix $\x\x^\star$ is a feasible point of (\ref{eqn:sdp}). Standard duality arguments in semidefinite programming (see Section \ref{sec:dualcertificates} for details) show that the following conditions are sufficient for $\x\x^\star$ to be the unique optimizer, i.e., the unique feasible point, of (\ref{eqn:sdp}): 
\begin{enumerate}[]
\item {\em Condition 1}: There exists a {\em dual certificate} matrix $\W = \sum_{m=0}^{M-1} \left( \lambda_m\AAA_m + \lambda_m^\star\AAA_m^\star\right)$, where $\lambda_0, \lambda_1, \cdots , \lambda_{M-1}$ are scalar complex numbers, with the following properties:
	\begin{itemize}
	\item[(a)] $\W \succcurlyeq 0$,
	\item[(b)] $\W \x = 0$,
	\item[(c)] $\rank( \W ) = N-1$.
	\end{itemize}
\item {\em Condition 2}: If $\HH \in T_{\x}$ and $\trace(\AAA_m\HH) = 0$ for $0 \leq m < M$, then $\HH = 0$ is the only solution.
\end{enumerate}
In words, the matrix $\W$ is parametrized by scalar variables $\lambda_0, \lambda_1, \cdots , \lambda_{M-1}$ through the aforementioned relationship. The process of dual certificate construction deals with assigning values to $\lambda_0, \lambda_1, \cdots , \lambda_{M-1}$ in such a way that the resulting $\W$ satisfies the properties specified in Condition 1. Condition 2 typically deals with well-known properties of polynomials, and is in general straightforward to show. 

The range space of $\sum_{m=0}^{M-1} \left( \lambda_m\AAA_m + \lambda_m^\star\AAA_m^\star\right)$, parametrized by $\lambda_0, \lambda_1, \cdots , \lambda_{M-1}$, is the set of all Hermitian $N \times N$ matrices which are such that the submatrices corresponding to the  $0 \leq n \leq L-1$ rows and columns,  $L \leq n \leq N-1$ rows and columns, $0 \leq n \leq L-1$ rows and $L \leq n \leq N-1$ columns, and $L \leq n \leq N-1$ rows and $0 \leq n \leq L-1$ columns are Toeplitz matrices.

Let $\SSSS_{z^{L}X_1(z),z^{N-L}X_2(z)}$ be the $N \times N$ Sylvester matrix constructed using the two polynomials $z^{L}X_1(z)$ and $z^{N-L}X_2(z)$, i.e., $\SSSS_{z^{L}X_1(z),z^{N-L}X_2(z)}$ is the following matrix:
{\scriptsize\begin{equation}
\begin{bmatrix}
& x_2[0] & 0 & . & 0 & -x_1[0] & 0 & . & 0 \\
& x_2[1]  & x_2[0] & . &  0 & -x_1[1]  & -x_1[0] & . &  0 \\
& x_2[2] & x_2[1] & . & .  & -x_1[2] & -x_1[1] & . & . \\
& . & x_2[2] & . & .& . & -x_1[2] & . & . \\
& . & . & .  & .& . & . & . & .\\
& . & . & . & x_2[0] & . & . & . & . \\
& 0 & . & . & x_2[1] & 0 & . & . & -x_1[0]\\
& 0 & 0 & . & x_2[2]& 0 & 0 & . & -x_1[1] \\
& . & 0 & .  & .& . & 0 & . & -x_1[2]\\
& . & . & .  & .& . & . & . & .\\
& .  & . & .  & x_2[L_2-1]& .  & . & . & -x_1[L_1-1]\\
& 0 & 0 & .  & 0& 0 & 0 & . & 0\\
\end{bmatrix}. \nonumber
\end{equation}}
The $0 \leq n \leq L-1$ columns of $\SSSS_{z^{L}X_1(z),z^{N-L}X_2(z)}$ are such that the $n$th column is $\x_2$ shifted by $n$ units, and the $L \leq n \leq N-1$ columns are such that the $n$th column is $-\x_1$ shifted by $n-L$ units. We refer the readers to Section \ref{sec:sylvestermatrices} for a description of the intuition behind defining such a matrix.

To show that Condition 1 is satisfied for $\x\x^\star$, we propose the following {\em dual certificate}: 
\begin{equation}
\W = \SSSS_{z^{L}X_1(z),z^{N-L}X_2(z)}^\star \SSSS_{z^{L}X_1(z),z^{N-L}X_2(z)}.
\label{eqn:dualcertificate}
\end{equation}
The matrix $\W$ is clearly in the range space of $\sum_{m=0}^{M-1} \left( \lambda_m\AAA_m + \lambda_m^\star\AAA_m^\star\right)$: Since the first $L$ columns of $\SSSS_{z^{L}X_1(z),z^{N-L}X_2(z)}$ are shifted copies of the $0$th column, their inner products have a Toeplitz structure. The same applies to the inner products between the remaining $N-L$ columns, and the inner products between the first $L$ columns and the remaining $N-L$ columns.

(a) $\W$ is positive semidefinite by construction. 

(b) Since $z^{N-L}X_2(z) \times {z^{L-1}}X_1(z) - z^{L} X_1(z) \times {z^{N-L-1}}X_2(z) = 0$, we have $\SSSS_{z^{L}X_1(z),z^{N-L}X_2(z)} \begin{bmatrix}\x_1 \\ \x_2 \end{bmatrix} = 0$. This is due to a property of Sylvester matrices described in (\ref{eq:sylv1}) and (\ref{eq:sylv2}). Alternately, $\SSSS_{z^{L}X_1(z),z^{N-L}X_2(z)} \begin{bmatrix}\x_1 \\ \x_2 \end{bmatrix} = 0$ can be verified by simply multiplying the quantities. Therefore, we have $\W \x = 0$. 
	
(c) The $x_1[0],x_2[0]\neq0$ condition ensures that the degrees of the polynomials $z^{L}X_1(z)$ and $z^{N-L}X_2(z)$ are $L$ and $N-L$ respectively. The polynomial $z$ is the greatest common divisor of $z^{L}X_1(z)$ and $z^{N-L}X_2(z)$, due to the fact that $z^{L-1}X_1(z)$ and $z^{N-L-1}X_2(z)$ are co-prime. Therefore, the rank of $\SSSS_{z^{L}X_1(z),z^{N-L}X_2(z)}$ is equal to $(L) + (N-L) - (1) = N - 1$. This is due to a property of  Sylvester matrices described in (\ref{eq:sylv0}), which states that the rank of the Sylvester matrix is equal to the sum of the degrees of the two associated polynomials minus the degree of their greatest common divisor. Consequently, we have $\rank(\W) = N-1$. 

Next, we show that Condition 2 is satisfied for almost all $\x\x^\star$. Since $\HH \in T_\x$, we can write $\HH =  \x \h^\star + \h \x^\star$ for some $\h = ( h[ 0 ] , h[ 1 ] , \cdots , h[ N - 1 ] )^T$. Instead of working with the length $N$ complex vector $\h$, we work with the length $2N$ real vector $\begin{bmatrix}\textrm{Re}(\h) \\ \textrm{Im}(\h)\end{bmatrix}$, where the operations $\textrm{Re}(\h)$ and $\textrm{Im}(\h)$ obtain the element-wise real and imaginary parts of $\h$ respectively. In other words, instead of working with the complex variables, we work with the real variables that form their real and imaginary parts.

The equation $\trace(\AAA_m\HH) = 0$, for any $m$, is linear with respect to $\begin{bmatrix}\textrm{Re}(\h) \\ \textrm{Im}(\h)\end{bmatrix}$. For example, the equation in complex variables
\begin{equation}
\nonumber x[0]h^\star[L-1]+h[0]x^\star[L-1] =0
\end{equation}
can be equivalently written as two equations in real variables:
{\scriptsize \begin{equation}
\nonumber \begin{bmatrix}\textrm{Re}(x[L-1])  & \textrm{Re}(x[0]) & \textrm{Im}(x[L-1])  & \textrm{Im}(x[0]) \\
-\textrm{Im}(x[L-1])  & \textrm{Im}(x[0]) & \textrm{Re}(x[L-1])  & -\textrm{Re}(x[0])
 \end{bmatrix}  \begin{bmatrix} \textrm{Re}(h[0]) \\ \textrm{Re}(h[L-1]) \\ \textrm{Im}(h[0]) \\ \textrm{Im}(h[L-1]) \end{bmatrix} = 0.
\end{equation} }

Let $\JJ_\x\begin{bmatrix}\textrm{Re}(\h) \\ \textrm{Im}(\h)\end{bmatrix}=0$ denote the constraints corresponding to the equations $\trace(\AAA_m\HH) = 0$ for $0 \leq m < M$. Note that $\JJ_\x$ is an $M \times 2N$ matrix, where $M=4N-4$, whose entries are either the entries of $\begin{bmatrix}\textrm{Re}(\x) \\ \textrm{Im}(\x)\end{bmatrix}$ with a plus or minus sign, or $0$. Instead of focusing on the precise structure of $\JJ_\x$, we complete the proof using the following property of $\JJ_\x$: The determinant of each $2N-1 \times 2N-1$ submatrix of $\JJ_\x$ is a finite-degree polynomial function of the entries of $\begin{bmatrix} \textrm{Re}(\x) \\ \textrm{Im}(\x) \end{bmatrix}$. 

Finite-degree polynomial functions have the following well-known property: they are either $0$ everywhere, or non-zero almost everywhere. Therefore, the determinant of any particular $2N-1 \times 2N-1$ submatrix of $\JJ_\x$ is either $0$ for all $\x$, or non-zero for almost all $\x$. Consequently, one of the following is true: the determinant of every $2N-1 \times 2N-1$ submatrix of $\JJ_\x$ is $0$ for all $\x$, or there exists at least one $2N-1 \times 2N-1$ submatrix which has a non-zero determinant for almost all $\x$. By substituting $\x = (1, 0 , \cdots, 0 )^T$, we eliminate the possibility of every $2N-1 \times 2N-1$ determinant being $0$ for all $\x$. As a result, the rank of $\JJ_\x$ is at least $2N-1$ for almost all $\x$. 

Furthermore, the vector corresponding to $\h = ic\x$ is in the null space of $\JJ_\x$ for any real constant $c$, due to the fact that the corresponding $\HH = -ic\x\x^\star + ic\x\x^\star$ is $0$. Therefore, for almost all $\x$, the rank of $\JJ_\x$ is equal to $2N-1$, and $\h = ic\x$ for any real constant $c$ is the only feasible solution. In other words, $\HH = -ic\x\x^\star + ic\x\x^\star = 0$ is the only matrix that satisfies both $\HH \in T_{\x}$ and $\trace(\AAA_m\HH) = 0$ for $0 \leq m < M$.
\end{proof}

\subsection{Extension to $2D$ Signals}
\label{sec:2d}

The results developed in this section for $1D$ signals can be extended to $2D$ signals using the following trick:

Suppose $\x_{1,2D}$ and $\x_{2,2D}$ are two $2D$ signals of size $L_{11} \times L_{12}$ and $L_{21} \times L_{22}$ respectively. Let $\aaa_{1,2D}, \aaa_{2,2D}$ and $\aaa_{12,2D}, \aaa_{21,2D}$ be their $2D$ autocorrelation and cross-correlation matrices respectively. Also, let $\x_{1,1D} = \textrm{vec}(\x_{1,2D})$ denote the $1D$ vector constructed by stacking the columns of $\x_{1,2D}$. The $1D$ autocorrelation vector of $\x_{1,1D}$, denoted by $\aaa_{1,1D}$, can be inferred from $\aaa_{1,2D}$. This can be seen as follows:

For $m \geq 0$, we have
{\scriptsize \begin{align}
\nonumber &\aaa_{1,1D}[m] = \sum_{n=0}^{L_{11}L_{12}-1} x_{1,1D}[n]x_{1,1D}^\star[n-m], \\
\nonumber &= \sum_{l_2=0}^{L_{12}-1}\sum_{l_1=m\textrm{mod}L_{11}}^{L_{11}-1} x_{1,1D}[l_2L_{11}+l_1]x_{1,1D}^\star[l_2L_{11}+l_1- m] \\
\nonumber &+ \sum_{l_2=0}^{L_{12}-1}\sum_{l_1=0}^{m\textrm{mod}L_{11}-1} x_{1,1D}[l_2L_{11}+l_1]x_{1,1D}^\star[l_2L_{11}+l_1- m], \\
& \nonumber = \sum_{l_2=0}^{L_{12}-1}\sum_{l_1=m\textrm{mod}L_{11}}^{L_{11}-1}
x_{1,2D}[l_1,l_2]x_{1,2D}^\star[l_1- m\textrm{mod}L_{11},l_2-\left\lfloor\frac{m}{L_{11}}\right\rfloor] \\
& \nonumber + \sum_{l_2=0}^{L_{12}-1}\sum_{l_1=0}^{m\textrm{mod}L_{11}-1} 
x_{1,2D}[l_1,l_2]x_{1,2D}^\star[l_1- m~\textrm{mod}~L_{11}+L_{11},l_2-\left\lfloor\frac{m}{L_{11}}\right\rfloor-1],\\
& = a_{1,2D}[m~\textrm{mod}~L_{11} , \left\lfloor\frac{m}{L_{11}}\right\rfloor ] + a_{1,2D}[m~\textrm{mod}~L_{11} - L_{11} , \left\lfloor\frac{m}{L_{11}}\right\rfloor + 1  ], \nonumber
\end{align}}
where, for notational convenience, $x_{1,1D}[n]$ has a value of zero outside the interval $0 \leq n \leq L_{11}L_{12}-1$ and $x_{1,2D}[n_1,n_2]$ has a value of zero outside the interval $0 \leq n_2 \leq L_{12}-1$. Since the values of $\aaa_{1,1D}$ for $m < 0$ are the conjugates of the values of $\aaa_{1,1D}$ for $m>0$, $\aaa_{1,1D}$ is completely characterized by $\aaa_{1,2D}$. Similarly, the $1D$ autocorrelation and cross-correlation vectors $\aaa_{2,1D}$,  $\aaa_{12,1D}$ and $\aaa_{21,1D}$ can be inferred from the $2D$ autocorrelation and cross-correlation matrices $\aaa_{2,2D}$,  $\aaa_{12,2D}$ and $\aaa_{21,2D}$ respectively.

In other words, the autocorrelation and cross-correlation vectors of $\x_{1,1D}$ and $\x_{2,1D}$ can be inferred from the $2D$ measurements. Using Theorem \ref{thm:main}, we conclude that almost all signals $\x_{1,1D}$ and $\x_{2,1D}$, which are such that the polynomials $z^{L_{11}L_{12}-1}X_{1,1D}(z)$ and $z^{L_{21}L_{22}-1}X_{2,1D}(z)$ are co-prime, and $x_{1,1D}[0],x_{2,1D}[0]\neq0$, can be uniquely reconstructed by Algorithm \ref{algo:sdp}. Finally, the desired signals $\x_{1,2D}$ and $\x_{2,2D}$ can be recovered from $\x_{1,1D}$ and $\x_{2,1D}$ respectively by appropriate reshaping. 

Consequently, the three masks proposed for phase retrieval in Section \ref{sec:pr} generalizes to the $2D$ setting as follows: Let $\x$ be a $2D$ signal of size $N_1 \times N_2$, and $L$ be an integer in the interval $1 \leq L \leq N_2-2$:
\begin{itemize}
\item[(a)] The first mask does not block any part of the signal.
\item[(b)] The second mask blocks the signal in the columns $L \leq n \leq N_2-1$. 
\item[(c)] The third mask blocks the signal in the columns $0 \leq n \leq L-1$.
\end{itemize}
A pictorial representation of the setup is provided in Fig. \ref{fig:pr2d}. 

{\bf Remarks}: (i) One could also perform the $\textrm{vec}(.)$ operation by stacking rows. 

(ii) The $2D$ autocorrelation and cross-correlation measurements correspond to affine constraints in the lifted domain. As a result, there is no need to calculate the $1D$ autocorrelation and cross-correlation measurements of the vectorized signals while implementing the algorithm in practice. 

(iii) In \cite{eldarnew}, the authors explore the general connection between $1D$ and $2D$ phase retrieval using similar tricks.

\subsection{Noisy setting}

In practice, the measurements are contaminated by additive noise. One way of implementing Algorithm \ref{algo:sdp} in the noisy setting is:
 \begin{align}
 \label{eq:sdpnoisy}
& \textrm{minimize} \hspace{0.9cm} \sum_{m=0}^{M-1}\abs{\trace(\AAA_m\X) - b_m }^2, \\
& \textrm{subject to} \hspace{0.9cm}  \X \succcurlyeq 0, \nonumber
\end{align}
where $b_m$, for $0 \leq m < M$, are the noisy autocorrelation and cross-correlation measurements. We choose $\ell_2$-norm in the objective function keeping in mind the fact that measurement noise is typically AWGN. In settings where the noise vector is known to be sparse, one could choose $\ell_1$-norm instead \cite{l1}. Since the desired solution is a rank one matrix, one could also add a $\trace(\X)$ term to the objective function with an appropriate regularizer \cite{fazel}.

\section{Numerical Simulations}

In this section, we demonstrate the performance of Algorithm \ref{algo:sdp} using numerical simulations.

First, we perform a comparative study of the Sylvester matrix-based and SDP-based algorithms in the noisy setting. The Sylvester matrix-based algorithm proposed in \cite{tong1} is implemented as described in the remark at the end of Appendix \ref{sec:sylvestermatrices}, and the SDP-based algorithm is implemented as described in (\ref{eq:sdpnoisy}).

We perform a total of $50$ trials for $L_1=32,L_2=32$ and $L_1=48,L_2=16$ setups. In each trial, the two signals $\x_1$ and $\x_2$ are sampled uniformly at random from a sphere of radius $\sqrt{L_1}$ and $\sqrt{L_2}$ respectively. If the signals do not satisfy $\abs{x_1[0]},\abs{x_2[0]} \geq 0.2$, then they are sampled again. Their autocorrelation and cross-correlation vectors are computed, and corrupted with additive zero mean Gaussian noise of appropriate variance (decided by the SNR). 

The normalized mean-squared error (NMSE), defined as 
\begin{equation}
\mathbb{E}\left[ \min_\phi\frac{\|\x - e^{i\phi}\hat{\x}\|_2^2}{\|\x\|_2^2}\right],
\end{equation}
where $\x = \begin{bmatrix}\x_1 \\ \x_2 \end{bmatrix}$, is plotted as a function of SNR in Fig. \ref{fig:sim1}. The approximately linear relationship between the NMSE and SNR in the logarithmic scale indicates that the reconstruction using both methods is stable in the noisy setting. Further, the superior performance of the SDP-based method can be clearly seen. Convex methods are known to be very robust to noise in general. So, this observation is along the expected lines. 

\begin{figure}
\begin{center}
{\includegraphics[scale=0.45]{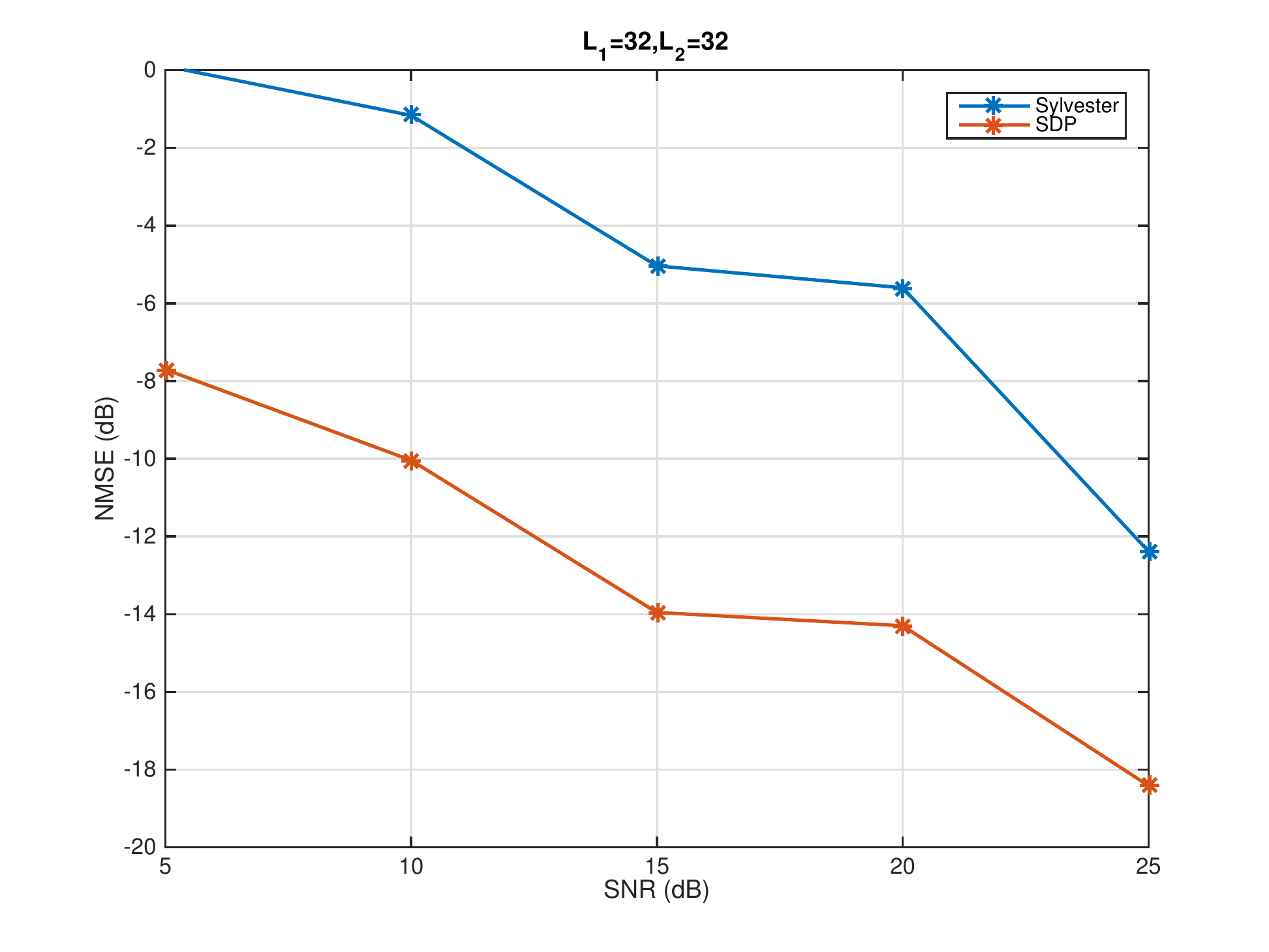}}
{\includegraphics[scale=0.45]{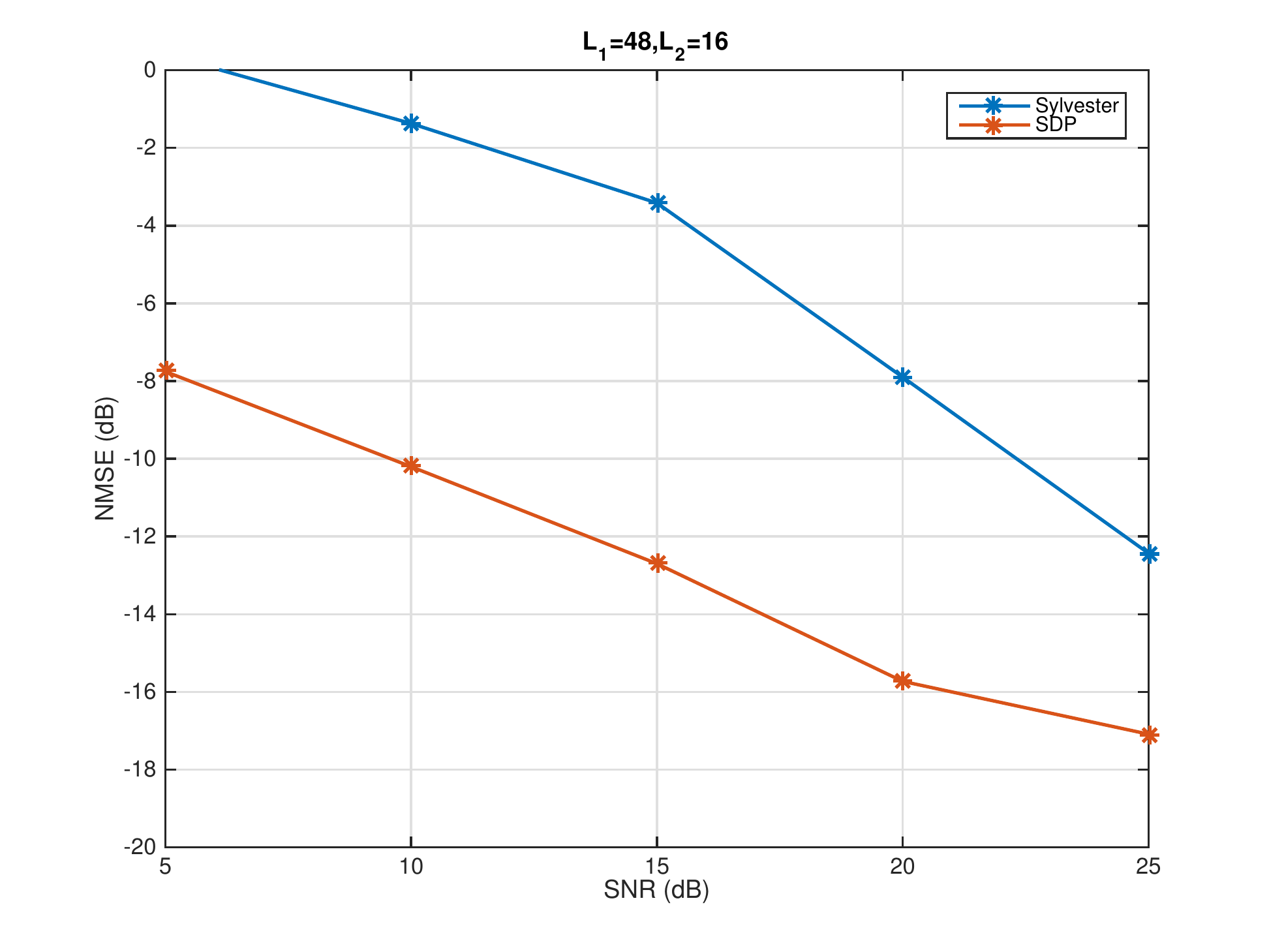}}
\end{center}
\caption{A comparative study of the NMSE vs SNR for the Sylvester matrix-based and SDP-based algorithms.}
\label{fig:sim1}
\end{figure}

Next, we demonstrate another important feature of the SDP-based framework. In applications like phase retrieval, one could potentially collect additional measurements using more masks. In such setups, the Sylvester matrix-based framework cannot make use of the additional measurements. In contrast, the additional measurements can be added as extra affine constraints in the SDP-based framework.  

Consider the setup with $N=64$ and $L=32$. While the setup is similar to $L_1=32,L_2=32$, there is a small difference in the way the noise is modeled. As described in Section \ref{sec:pr}, the cross-correlation vectors are not directly measured and instead calculated using three autocorrelation measurements, because of which their variance is three times higher. 

The signal $\x$ is sampled as before. Fig. \ref{fig:sim2} compares the stability of the SDP-based method in the following two setups: (1) no additional measurements are considered and (2) additional measurements using masks defined by $L=16,48$ are considered. As expected, the plot suggests that the additional measurements lead to a further improvement in stability.

\begin{figure}
\begin{center}
{\includegraphics[scale=0.45]{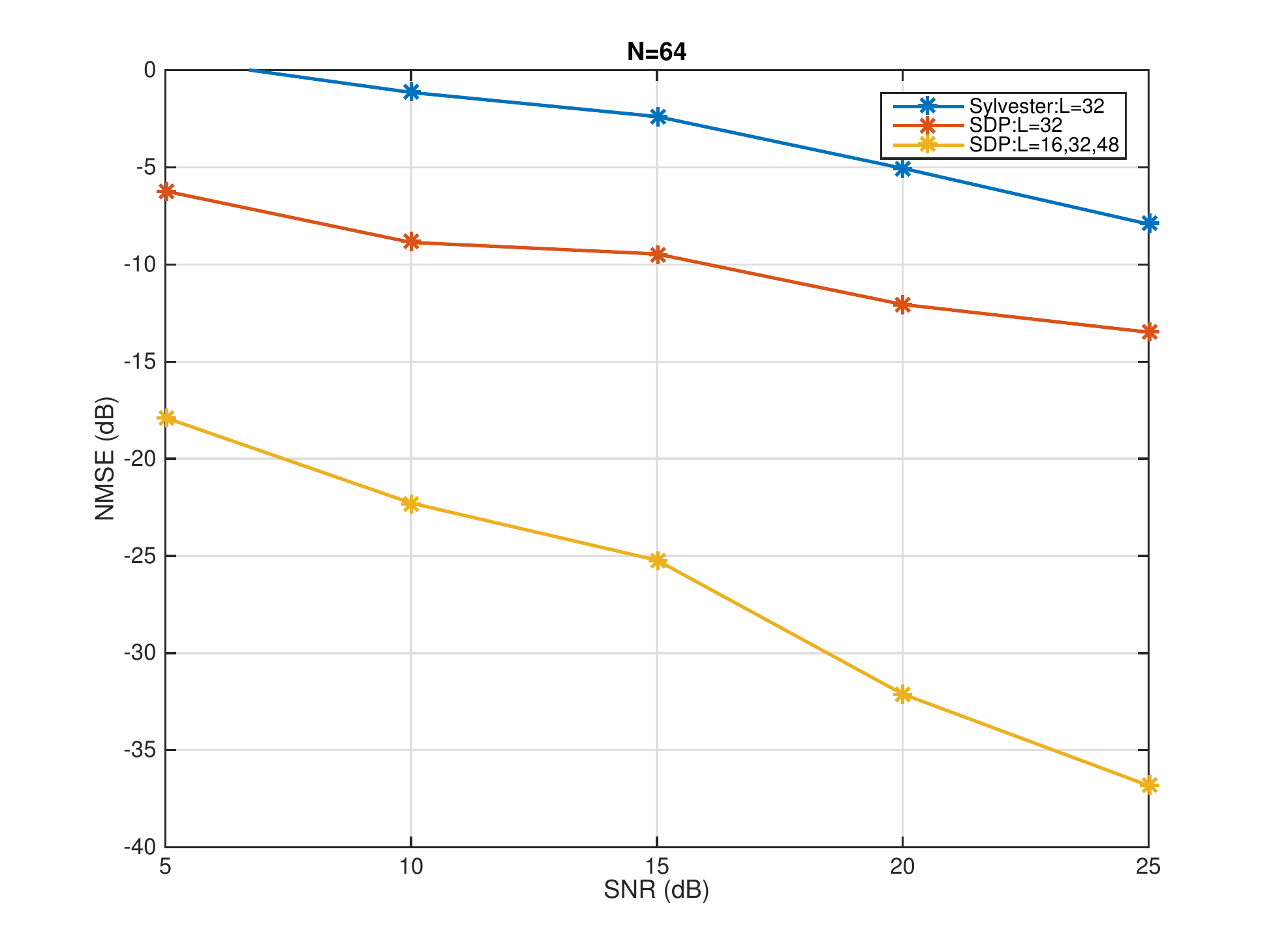}}
\end{center}
\caption{NMSE vs SNR for the SDP-based algorithm when additional measurements are available.}
\label{fig:sim2}
\end{figure}

\section{Conclusions}

In this work, we considered the problem of reconstruction of signals from their autocorrelation and cross-correlation measurements. We first described two applications where this reconstruction problem naturally arises: phase retrieval and blind channel estimation. In the phase retrieval setup, where only the autocorrelation vectors can be measured, we proposed three simple masks and showed that phase retrieval is equivalent to the aforementioned reconstruction problem when measurements are obtained using them.

Then, we formulated this problem as a convex program using the standard lifting method and provided theoretical guarantees. In particular, we showed that the convex program uniquely identifies almost all signals  in the noiseless setting. In the noisy setting, we demonstrated the superior stability of this approach over the standard Sylvester matrix-based approach through numerical simulations.

\section{Method of Dual Certificates}
\label{sec:dualcertificates}

In this section, we provide an overview of the method of dual certificates. This technique is applicable to a wide class of optimization problems. Here, we focus our attention on using it as a theoretical tool to analyze feasibility-type SDPs.  

Consider the following primal optimization problem:
\begin{align}
\label{eqn:primal}
& \textrm{find} \hspace{1.3cm} \X, \\
& \textrm{subject to} \hspace{0.5cm}\trace(\AAA_m\X) = b_m \for 0 \leq m < M, \nonumber \\
& \hspace{1.9cm} \X \succcurlyeq 0, \nonumber
\end{align}
where $\X$ is an $N \times N$ Hermitian matrix. The objective is to derive a set of {\em tractable} conditions which ensure that the matrix $\x\x^\star$ is the unique feasible point, i.e., the unique optimizer, of (\ref{eqn:primal}). 
The dual optimization problem is given by  
\begin{align}
& \max_{\lambda_0,\lambda_1,\cdots,\lambda_{M-1}} \hspace{0cm} -\sum_{m=0}^{M-1} \left(\lambda_m b_m + \lambda_m^\star b^\star_m \right), \\
& \textrm{subject to} \hspace{0.55cm}\sum_{m=0}^{M-1} \left( \lambda_m\AAA_m + \lambda_m^\star\AAA_m^\star \right) \succcurlyeq 0. \nonumber
\end{align}
We use the definition 
\begin{equation}
\W =\sum_{m=0}^{M-1} \left( \lambda_m\AAA_m + \lambda_m^\star\AAA_m^\star \right). \nonumber
\end{equation}
The matrix $\W$, which is parametrized by the dual variables $\lambda_0, \lambda_1, \cdots ,\lambda_{M-1}$, is commonly referred to as {\em dual certificate} in literature.
 
 KKT conditions show that, for $\x\x^\star$ and $\lambda_0, \lambda_1, \cdots ,\lambda_{M-1}$ to be the primal and dual optimizers respectively\footnote[2]{Since the primal optimization problem is a feasibility problem (i.e., there is no objective function), every feasible point is an optimizer. In order to obtain the dual optimization problem, a constant $0$ can be used as the objective function.}, the following criteria are necessary and sufficient: 
\begin{itemize}
	\item  $\trace(\AAA_m\x\x^\star) = b_m \for 0 \leq m < M$ (primal feasibility), 
	\item $\W \succcurlyeq 0$ (dual feasibility, Condition 1a),
	\item $\trace( \W \x\x^\star) = 0$ (complementary slackness).
\end{itemize}

The complementary slackness criterion can be equivalently written as $\W \x = 0$ (Condition 1b) due to the fact that when $\W \succcurlyeq 0$,  $\trace( \W\x\x^\star ) = 0$ and $\W\x = 0$ are equivalent statements.

Next, the goal is to ensure that the matrix $\x\x^\star$ is the only primal optimizer. Suppose $\x\x^\star + \HH$ is a primal optimizer. In what follows, we derive tractable conditions which are only satisfied by $\HH=0$.

Let $T_{\x}$ denote the set of Hermitian matrices of the form
\begin{equation}
T_{\x} = \{ \x\h^\star + \h\x^\star : \h \in \mathbb{C}^N\}, \nonumber
\end{equation}
and $T_{\x}^\perp$ be its orthogonal complement. The set $T_{\x}$ can be interpreted as the tangent space at $\x\x^\star$ to the manifold of Hermitian matrices of rank one. We use $\HH_{T_{\x}}$ and $\HH_{T_{\x}^\perp}$ to denote the projections of the matrix $\HH$ onto the subspaces $T_{\x}$ and $T_{\x}^\perp$ respectively. The matrix $\HH$ is such that
\begin{equation}
\trace( \AAA_m\HH ) = \trace( \AAA_m^\star\HH ) = 0 \for 0 \leq m < M \nonumber
\end{equation}
and $\HH_{T_\x^\perp} \succcurlyeq 0$ (primal feasibility). The first constraint is due to the fact that  $\trace(\AAA_m\x\x^\star) = \trace( \AAA_m ( \x\x^\star + \HH ) ) = b_m$ and $\trace(\AAA_m^\star\x\x^\star) = \trace( \AAA_m^\star ( \x\x^\star + \HH ) ) = b_m^\star$ for $0 \leq m < M$. The second constraint is due to the following: $(\HH_{T_\x^\perp})\x = 0$ and $\x^\star(\HH_{T_\x^\perp}) = 0$ by construction, and for any vector perpendicular to $\x$, say $\x^\perp$, we have
\begin{align}
\x\x^\star + \HH \succcurlyeq 0 &\Rightarrow  \x\x^\star + \HH_{T_\x} + \HH_{T_\x^\perp} \succcurlyeq 0 \nonumber \\
& \Rightarrow \x^{\perp\star} (\x\x^\star + \HH_{T_\x} + \HH_{T_\x^\perp} )\x^\perp \geq 0 \nonumber \\
& \Rightarrow \x^{\perp\star} (\HH_{T_\x^\perp} )\x^\perp \geq 0. \nonumber
\end{align}

As a consequence of the first constraint, we have $\trace( \W \HH )=0$ regardless of the choice of $\lambda_0, \lambda_1, \cdots ,\lambda_{M-1}$. Note that 
\begin{equation}
0 = \trace( \W \HH ) = \trace( \W \HH_{T_\x} ) + \trace( \W \HH_{T_\x^\perp} ).\nonumber
\end{equation} 
The condition $\W\x = 0$ ensures that $\trace( \W \HH_{T_\x} ) = 0$, because of which we have $\trace( \W \HH_{T_\x^\perp} ) = 0$. Since $\W$ and $\HH_{T_\x^\perp}$ are both positive semidefinite matrices, if $\W\x=0$ and $\rank( \W ) = N-1$ (Condition 1c), then $\HH_{T_\x^\perp}=0$ is the only possibility.  

We have shown that, if Conditions 1a, 1b and 1c are satisfied, then any primal optimizer must be of the form $\x\x^\star + \HH_{T_\x}$. In other words, Conditions 1a, 1b and 1c restrict the matrix $\HH$ to the set $T_\x$. Finally, suppose $\HH=0$ is the only matrix that satisfies both $\HH \in T_{\x}$ and $\trace(\AAA_m\HH) = 0$ for $0 \leq m < M$ (Condition 2). Then, the matrix $\x\x^\star$ is the only optimizer of (\ref{eqn:primal}).

Therefore, if Conditions 1a, 1b, 1c and Condition 2 are satisfied, then $\x\x^\star$ is the unique optimizer of (\ref{eqn:primal}). Indeed, since (\ref{eqn:primal}) is a feasibility problem, the conditions ensure that $\x\x^\star$ is its unique feasible point. 

\section{Sylvester Matrices}
\label{sec:sylvestermatrices}

Sylvester matrices are typically encountered when one is interested in common factors between two univariate polynomials. In particular, let $P_1(z) = Q(z)R_1(z)$ and $P_2(z) = Q(z)R_2(z)$ be two polynomials such that $R_1(z)$ and $R_2(z)$ are co-prime, i.e., do not have any common factors.  Given $P_1(z)$ and $P_2(z)$, the goal is to identify their {\em greatest common divisor} $Q(z)$, and their {\em residuals} $R_1(z)$ and $R_2(z)$. 

Suppose $P_1(z) = p_{1,0}z^{d_{p_1}} + p_{1,1}z^{d_{p_1}-1} + \cdots + p_{1,d_{p_1}}$ and $P_2(z) = p_{2,0}z^{d_{p_2}} + p_{2,1}z^{d_{p_2}-1} + \cdots + p_{2,d_{p_2}}$, and $\pp_1=(p_{1,0}, p_{1,1}, \cdots , p_{1,d_{p_1}})^T$ and $\pp_2=(p_{2,0}, p_{2,1}, \cdots , p_{2,d_{p_2}})^T$ are the corresponding coefficient vectors. Then, the Sylvester matrix associated with $P_1(z)$ and $P_2(z)$, denoted by $\SSSS_{P_1(z), P_2(z)}$, is the following $(d_{p_1} + d_{p_2}) \times (d_{p_1} + d_{p_2})$ matrix:
{\scriptsize\begin{equation}
\begin{bmatrix} 
& p_{2,0} & 0 & . & 0 & -p_{1,0} & 0 & . & 0 \\
& p_{2,1}  & p_{2,0} & . &  0 & -p_{1,1} & -p_{1,0} & . &  0 \\
& .  & p_{2,1} & . & .  & . & -p_{1,1} & . & . \\
& . & . & . & .& . & . & . & .\\
& . & . & . & . & -p_{1,d_{p_1}} & . & . & . \\
& p_{2,d_{p_2}} & . & . & . & 0 & -p_{1,d_{p_1}} & . & . \\
& 0 & p_{2,d_{p_2}} & . & . & 0 & 0 & . & .\\
& 0 & 0 & . & .& . & 0 & . & .\\
& . & 0 & . & .& . & . & . & .\\
& . & . & . & p_{2,d_{p_2}-1} & . & . & . & -p_{1,d_{p_1}-1}\\
& 0  & 0 & . & p_{2,d_{p_2}} & 0  & 0 &  . & -p_{1,d_{p_1}}\\
\end{bmatrix}.
\label{sylvestermatrix}
\end{equation}}
The first $d_{p_1}$ columns are shifted copies of $\pp_2$ and the remaining $d_{p_2}$ columns are shifted copies of $-\pp_1$. 

The rank of the Sylvester matrix is a function of the degrees of the two associated polynomials and their greatest common divisor. In particular, the following holds \cite{sylv2}:
\begin{equation}
\label{eq:sylv0}
\rank(\SSSS_{P_1(z), P_2(z)}) = d_{p_1} + d_{p_2} - d_{q},
\end{equation}
where $d_{q}$ is the degree of $Q(z)$. Consequently, $\SSSS_{P_1(z), P_2(z)}$ has full rank iff the polynomials $P_1(z)$ and $P_2(z)$ do not have any common factors. 

Furthermore, the null space of the Sylvester matrix provides information about the residuals of the associated polynomials. In particular, let $V_1(z) = v_{1,0}z^{d_{p_1}-1} + v_{1,1}z^{d_{p_1}-2} + \cdots + v_{1,d_{p_1}-1}$ and $V_2(z) =v_{2,0}z^{d_{p_2}-1} + v_{2,1}z^{d_{p_2}-2} + \cdots + v_{2,d_{p_2}-1} $, and $\vv_1=(v_{1,0}, v_{1,1}, \cdots , v_{1,d_{p_1}-1})^T$ and $\vv_2=(v_{2,0}, v_{2,1}, \cdots , v_{2,d_{p_2}-1})^T$ be the corresponding coefficient vectors. The vector $\begin{bmatrix} \vv_1 \\ \vv_2 \end{bmatrix}$ belongs to the null space of $\SSSS_{P_1(z), P_2(z)}$, i.e.,  
\begin{equation}
\SSSS_{P_1(z), P_2(z)} \begin{bmatrix} \vv_1 \\ \vv_2 \end{bmatrix} = 0
\label{eq:sylv1}
\end{equation}
iff
\begin{equation}
P_2(z)V_1(z) - P_1(z)V_2(z) = 0. 
\label{eq:sylv2}
\end{equation}
The proof of this is straightforward: The constraint that the coefficients of every power of $z$ in (\ref{eq:sylv2}) must be $0$ results in the same set of equations as (\ref{eq:sylv1}). In fact, this is precisely the idea behind the structure of Sylvester matrices. Consequently, if $v_{1,0},v_{1,1}, \cdots , v_{1,d_q-2}$ and $v_{2,0},v_{2,1}, \cdots , v_{2,d_q-2}$ are set to $0$, i.e., the degrees of the residuals are forced to be at most $d_{p_1}-d_q$ and $d_{p_2}-d_q$ respectively, then the only solution to (\ref{eq:sylv2}) is $V_1(z)=R_1(z)$ and $V_2(z) = R_2(z)$ up to a constant factor. 

The left null space of the Sylvester matrix contains information about the greatest common divisor of the associated polynomials. The details are beyond the scope of this paper, and can be found in \cite{sylv2}.

{\bf Remark}: When $P_1(z) = X_1^\star(z^{-\star}) \times z^{L_1-1}X_1(z) = z^{L_1-1}A_1(z) $ and $P_2(z) = X_1^\star(z^{-\star}) \times z^{L_2-1}X_2(z) = z^{L_2-1}A_{21}(z) $, and the degrees of the residuals are forced to be at most $L_1-1$ and $L_2-1$ respectively, the only solution to (\ref{eq:sylv1}) is $\begin{bmatrix} \x_1 \\ \x_2 \end{bmatrix}$ up to a constant factor if $z^{L_1-1}X_1(z)$ and $z^{L_2-1}X_2(z)$ are co-prime, and $x_1[0],x_2[0] \neq 0$  (to resolve the time-shift ambiguity). This is the Sylvester matrix-based solution proposed in \cite{tong1}. In the noisy setting, the Sylvester matrix is constructed using the noisy measurements, and the right singular vector corresponding to the smallest singular value is returned as the estimate.

\end{document}